\def\Anonym{0} 
\def\Eprint{1} 
\def\Draft{0} 
\def\Class{0} 
\def\IacrName{tosc}

\if\Class0
	\documentclass{llncs}
	\pagestyle{plain}
    \usepackage{geometry}
	\geometry{
		a4paper,         
		textwidth=15cm,  
		textheight=22cm, 
		heightrounded,   
		hratio=1:1,      
		vratio=2:3,      
	}
\fi
\if\Class1
	\documentclass[journal=\IacrName,preprint,spthm]{iacrtrans}
\fi


\usepackage[misc,geometry]{ifsym} 

\usepackage{graphicx}
\usepackage{url}
\usepackage{algorithm}
\usepackage{algorithmicx}
\usepackage[noend]{algpseudocode}
\usepackage[hidelinks,colorlinks,backref=page]{hyperref}

\usepackage{cite}
\usepackage{xifthen}
\usepackage{import}
\usepackage{bookmark} 

\usepackage{booktabs}
\usepackage{makecell}
\usepackage{multirow}
\usepackage{subcaption}
\usepackage{siunitx}
\usepackage{comment}

\usepackage{array}
\newcolumntype{L}[1]{>{\raggedright\let\newline\\\arraybackslash\hspace{0pt}}m{#1}}
\newcolumntype{C}[1]{>{\centering\let\newline\\\arraybackslash\hspace{0pt}}m{#1}}
\newcolumntype{R}[1]{>{\raggedleft\let\newline\\\arraybackslash\hspace{0pt}}m{#1}}

\if\Class0

\fi

\usepackage{amsmath}
\usepackage{amsthm}
\usepackage{amssymb}
\usepackage{mathtools} 

\usepackage{mathrsfs} 
\usepackage{dsfont}
\usepackage{mathbbol} 


\newtheorem{notation}{Notation}
\newtheorem{observation}{Obsevation}

\usepackage{extarrows}  

\usepackage{xspace} 

\usepackage{tikz}
\usetikzlibrary{arrows}
\usetikzlibrary{patterns}
\usetikzlibrary{shapes}
\usetikzlibrary{calc}
\usetikzlibrary{decorations.pathreplacing}
\usepackage{combelow}

\algdef{SE}[SUBALG]{Indent}{EndIndent}{}{\algorithmicend\ }%
\algtext*{Indent}
\algtext*{EndIndent}

\usepackage{paralist} 
\usepackage{todo}

\setcounter{MaxMatrixCols}{20} 

\usepackage[strings]{underscore} 
\graphicspath{{figures/}}

\newcommand\F{\mathbb{F}}









\newcommand\pround[1]{\left( #1 \right)}

\newcommand\pabs[1]{\left| #1 \right|}

\newcommand\ifPar[1]{%
	\ifthenelse{\isempty{#1}}%
		{}%
		{\pround{#1}}%
}
\newcommand\ifParElse[2]{
	\ifthenelse{\isempty{#1}}{#2}{(#1)}
}

\newcommand\eq[1]{\begin{align*}
#1
\end{align*}}


\algnewcommand{\LeftComment}[1]{\Statex \(\triangleright\) #1}

\newcommand\Input{\textbf{Input:}\xspace}
\newcommand\Output{\textbf{Output:}\xspace}



\title{
DenseQMC: an efficient bit-slice implementation of the Quine-McCluskey algorithm%
\if\Anonym0%
\texorpdfstring{\thanks{
\if\Eprint1
\fi
The work was supported by the Luxembourg National Research Fund's (FNR) and the German Research Foundation's (DFG) joint project APLICA (C19/IS/13641232).}
}{}
\fi
}

\if\Anonym0
\author{Aleksei Udovenko}

\institute{
	SnT, University of Luxembourg
	\\
	\email{aleksei@affine.group}
}
\fi

\if\Draft1
\newcommand\AU[1]{\Todo[AU]{\color{green!20!black}{#1}}}
\newcommand\AB[1]{\Todo[AB]{\color{blue!20!black}{#1}}}
\newcommand\AUdone[1]{\done\Todo[AU]{\color{green!20!black}{#1}}}
\newcommand\ABdone[1]{\done\Todo[AB]{\color{blue!20!black}{#1}}}
\else
\newcommand\AU[1]{}
\newcommand\AB[1]{}
\newcommand\AUdone[1]{}
\newcommand\ABdone[1]{}
\fi

\begin{document}
    \renewcommand{\sectionautorefname}{Section}
    \renewcommand{\subsectionautorefname}{Subsection}
    \if\Class0
    \newcommand{\algorithmautorefname}{Algorithm}
    \fi
    \newcommand\suppleref[1]{\hyperref[#1]{Supplementary Material~\ref{#1}}}

    \maketitle
    
    \newcommand\kws{
\keywords{
	Boolean minimization
	\and Two-level minimization
	\and Quine-McCluskey method
	\and Implementation
	\and Bit-slice
}
}
\if\Class1
\kws
\fi
\begin{abstract}
This note describes a new efficient bit-slice implementation DenseQMC of the Quine-McCluskey algorithm for finding all prime implicants of a Boolean function in the dense case. It is practically feasible for $n \le 23$ when run on a common laptop or for $n \le 27$ when run on a server with 1 TiB RAM.

This note also outlines a very common mistake in the implementations of the Quine-McCluskey algorithm, leading to a quadratic slowdown. An optimized corrected implementation of the classic approach is also given (called SparseQMC).

The implementation is freely available at \href{https://github.com/hellman/Quine-McCluskey}{github.com/hellman/Quine-McCluskey} .
\if\Class0
\kws
\fi
\end{abstract}

    \if\Draft1
    \setcounter{tocdepth}{10} 
    \fi
    
    \section{Introduction}

Boolean/logic minimization is a broad topic with numerous applications, including logic synthesis, expressing problems for SAT solvers (e.g., in symmetric-key cryptanalysis \cite{ToSC:SunWanWan21a}), performing Qualitative Comparative Analysis (QCA) in social sciences. The simplest minimization setting is the \emph{two-level minimization}, in which the target function has to be expressed in one of the two following types of formulas: \emph{sum-of-products} (SOP) or \emph{product-of-sums} (POS), also known as \emph{disjunctive normal form} (DNF) and \emph{conjunctive normal form} (CNF) respectively.
Here, a ``sum'' refers to the logic OR operator and a ``product'' refers to the logic AND operator.
The two formula types are dual to each other, as they are related by the De Morgan's laws. Therefore, it is sufficient to study any one of the two, and this work describes algorithms for DNF minimization.

A DNF/SOP formula consists of \emph{clauses} connected by the OR operator, and where each clause consists of variables or their negations connected by the AND operator. According to the sum/product interpretation, the OR operator is often denoted by ``+'' and the AND operator is omitted.
Furthermore, the logic negation (NOT) is denoted by the prime mark (for example, $x'$).
For example, the 3-bit majority function Maj can be expressed in DNF as \eq{
\text{Maj}(x,y,z) = xy + xz + yz.
}

The DNF minimization problem asks, given an expression of a Boolean function, to find a DNF formula with the smallest possible number of logic AND/OR operations (sometimes, only OR operations are counted, leading to minimization of the number of clauses). This work considers the most general case when the input function is given by its truth table.

\paragraph{The Quine-McCluskey method}
In a series of works \cite{quine1,quine2,quine3}, Quine and McCluskey developed an algorithm for two-level minimization. It consists of two steps:
\begin{enumerate}
\item Find all minimal products that can be consistently included in the DNF (also called \emph{prime implicants}). Minimality here means that no proper divisor of the product is consistent with the target function.
\item Choose a subset of minimal products according to the problem's goal: minimizing the number of products or minimizing the total number of AND/OR operations.
\end{enumerate}
The second step is an instance of the SetCover problem, which is NP-hard. Furthermore, NP-hard instances were proven to occur in this setup. Therefore, one has to resort to heuristic methods. Classically, the Petrick's method \cite{Petrick} was used. However, a modern integer optimization suite may perform better (as noted already in \cite{FCN61}). In addition, there are several dedicated exact or approximate heuristic solvers for the SetCover problem in general \cite{GYWL15,LMSZLL20,SZLLLM21}, as well as methods tailored to the Boolean minimization problem \cite{KS12}.

\paragraph{New method}
This work focuses on the first step of the Quine-McCluskey algorithm, namely, finding all prime implicants of the function. Although the second step is overwhelmingly dominating the full procedure, an efficient solution to the first step opens doors to fast heuristic approximate methods for the second step. In particular, the set of prime implicants itself provides a DNF expression of the function (or CNF of the function's complement) which can be sufficiently compact (even if containing redundant terms).

For the first step of the Quine-McCluskey algorithm, we propose a new implementation based on multidimensional ternary transforms and on the bit-slicing technique, called DenseQMC. To the best of our knowledge, the new approach significantly outperforms all available previous implementations and expands the feasible number $n$ of function's inputs to $n=23$ when run on a laptop and to $n=27$ when run on a server equipped with 1 TiB of RAM (see \autoref{tab:benchmark}). It is in particular useful for very dense problems (computing a DNF of a dense function or a CNF of a sparse function), which are the worst-case for the Quine-McCluskey algorithm and where sparse methods are too slow. For a fair comparison and due to absence of competitive classic (sparse) implementations, we also developed an optimized implementation of the classic Quine-McCluskey method, called SparseQMC. Surprisingly, it also outperforms existing reports and even the new DenseQMC when density is at most 50\%-60\%. 

A brief benchmark is given in \autoref{tab:benchmark}.
We do not provide explicit comparison with existing implementations, since we could not find any competitive implementation and/or sufficient performance information. For a rough comparison, the work \cite{Jain08} reports 34 seconds for a function with $n=11$, \cite{DusaThiem15} reports 5 seconds for a very sparse function with $n=15$, all of which are done instantly by any of our implementations; \cite{FPGA20} gives mixed CPU/GPU timings such as 1000 seconds and 10 GiB RAM for the dense case of $n=20$, 10 minutes for an $n=24$-bit function  of density 70\%, 2000 seconds for an $n=28$-bit function of density 30\%, close to $10^6$ seconds on an $n=32$-bit function of density 42\% using disk storage (due to ambiguous reports and absence of available implementation, it is difficult to provide a clear comparison). Note that the dense case is often occurring in practice when optimizing a CNF formula of a sparse function, for example, \cite{ToSC:BouCog20} report 2 hours of work for the case of $n=16$ and 82\%-dense function, appearing in cryptographic applications.
Note that this work does not intend to compete with sparse or approximate methods such as ESPRESSO-Exact \cite{VLSI84} or more recent ``Consistency Cubes'' method \cite{Dusa18}.

The source code is publicly available at
\begin{center}
\href{https://github.com/hellman/Quine-McCluskey}{github.com/hellman/Quine-McCluskey}
\end{center}

\begin{table}[htbp!]
	\centering
	\caption{Time/memory benchmark of the optimized implementations of new DenseQMC (\autoref{sec:dense},  \autoref{alg:bitslice}) and classic SparseQMC (\autoref{sec:sparse}) algorithms on random $n$-bit Boolean functions with a given density. Ran on a single core of an AMD EPYC 3.2 GHz processor with 1 TiB of RAM available.}
	\label{tab:benchmark}
	
	\vspace{0.25cm}

\setlength\tabcolsep{0.6em}
\begin{tabular}{c|rr|rr|rr|rr}
& \multicolumn{2}{c|}{\makecell{DenseQMC \\density: any}}
& \multicolumn{2}{c|}{\makecell{SparseQMC\\density: 25\%}}
& \multicolumn{2}{c|}{\makecell{SparseQMC\\density: 50\%}}
& \multicolumn{2}{c}{\makecell{SparseQMC\\density: 99\%}}
\\
\toprule
$n$
	& RAM & Time
	& RAM & Time
	& RAM & Time
	& RAM & Time
\\
\toprule
16
& 5 MiB & 0.01 \texttt{s}
& 2.0 MiB & 0.01 \texttt{s}
& 4.0 MiB & 0.07 \texttt{s}
& 0.2 GiB & 18 \texttt{s}
\\

17
& 16 MiB & 0.04 \texttt{s}
& 4.0 MiB & 0.03 \texttt{s}
& 16 MiB & 0.2 \texttt{s}
& 1.0 GiB & 62 \texttt{s}
\\

18
& 49 MiB & 0.2 \texttt{s}
& 8 MiB & 0.05 \texttt{s}
& 32 MiB & 0.7 \texttt{s}
& 2.0 GiB & 3.2 \texttt{m}
\\

19
& 0.1 GiB & 0.5 \texttt{s}
& 16 MiB & 0.2 \texttt{s}
& 64 MiB & 1.9 \texttt{s}
& 6 GiB & 11.2 \texttt{m}
\\

20
& 0.4 GiB & 1.5 \texttt{s}
& 32 MiB & 0.6 \texttt{s}
& 0.1 GiB & 5 \texttt{s}
& 16 GiB & 0.6 \texttt{h}
\\

21
& 1.3 GiB & 5 \texttt{s}
& 64 MiB & 1.6 \texttt{s}
& 0.2 GiB & 10 \texttt{s}
& 64 GiB & 1.6 \texttt{h}
\\

22
& 3.8 GiB & 13 \texttt{s}
& 0.1 GiB & 3 \texttt{s}
& 0.5 GiB & 22 \texttt{s}
& - & -
\\

23
& 12 GiB & 39 \texttt{s}
& 0.2 GiB & 7 \texttt{s}
& 1.5 GiB & 49 \texttt{s}
& - & -
\\

24
& 35 GiB & 2.0 \texttt{m}
& 0.5 GiB & 18 \texttt{s}
& 3.0 GiB & 1.9 \texttt{m}
& - & -
\\

25
& 0.1 TiB & 6.0 \texttt{m}
& 1.0 GiB & 35 \texttt{s}
& 6 GiB & 5.0 \texttt{m}
& - & -
\\

26
& 0.3 TiB & 0.3 \texttt{h}
& 2.0 GiB & 78 \texttt{s}
& 12 GiB & 13.1 \texttt{m}
& - & -
\\

27
& 0.9 TiB & 1.1 \texttt{h}
& 4.0 GiB & 3.2 \texttt{m}
& 24 GiB & 0.5 \texttt{h}
& - & -
\\

28
& - & -
& 8 GiB & 8.2 \texttt{m}
& 64 GiB & 1.2 \texttt{h}
& - & -
\\

29
& - & -
& 24 GiB & 0.3 \texttt{h}
& 0.1 TiB & 3.1 \texttt{h}
& - & -
\\

30
& - & -
& 48 GiB & 0.8 \texttt{h}
& - & -
& - & -
\\

31
& - & -
& 96 GiB & 1.8 \texttt{h}
& - & -
& - & -

\end{tabular}

\end{table}

\begin{figure}[htbp!]
	\centering
	\includegraphics[width=0.49\textwidth]{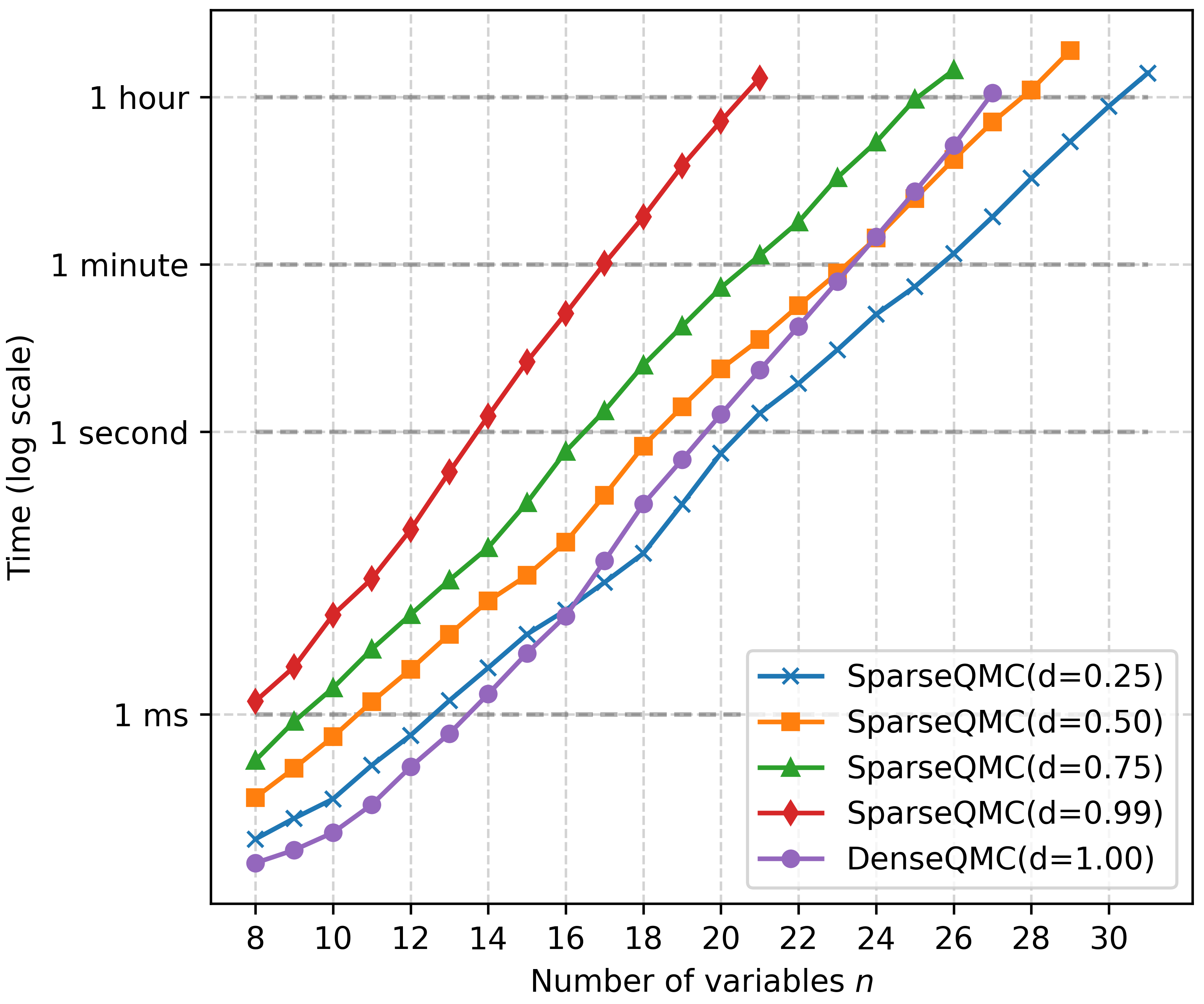}
	\includegraphics[width=0.49\textwidth]{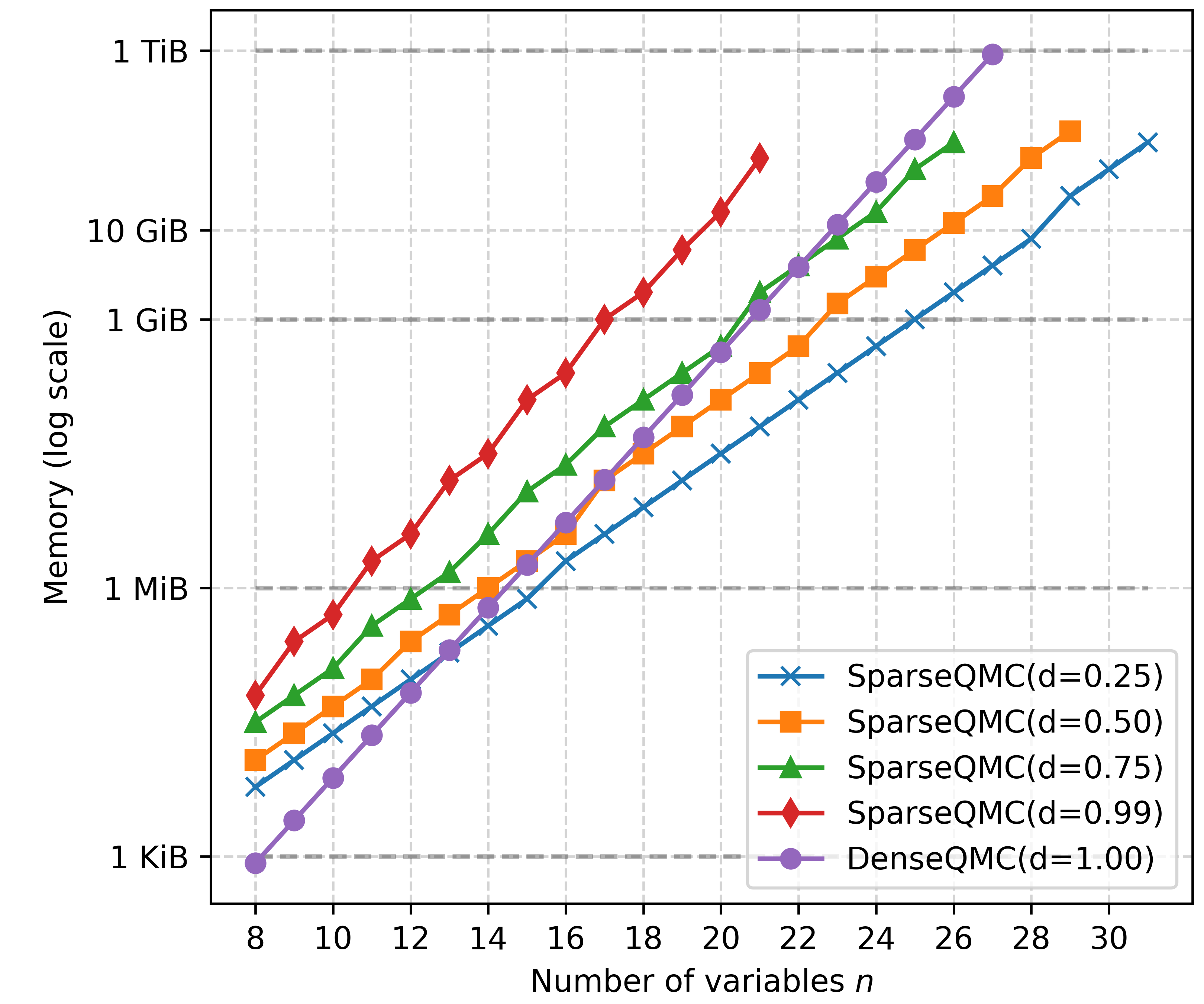}
	\caption{Comparison of DenseQMC and SparseQMC implementations (time and memory usage) on random $n$-bit Boolean functions with a given density $d$. Evaluated on a single core of a 3.2GHz CPU.}
	\label{fig:benchmark}
\end{figure}


\section{Definitions and notation}

\newcommand\zero{\mathtt{0}}
\newcommand\one{\mathtt{1}}
\renewcommand\star{\mathtt{*}}

The Boolean AND,OR,NOT operations are denoted by $\land,\lor,\lnot$ respective. They can be operate on single bits or bitwise on bit-vectors. The left and right shift operations on bit-vectors are denoted by $\lll$ and $\ggg$ respectively.

Let $\Sigma = \{\zero, \one, \star\}$ be the alphabet. The symbol $\star$ is called a \emph{wildcard}. Throughout the work, the number of input variables to the considered function is denoted by $n$.
A \emph{literal} $\alpha$ is either an input variable $x_i$ or its negation $\lnot x_i$.

\begin{definition}
	A \emph{minterm} is a product $\alpha_{1}\ldots \alpha_{m}$ of literals $\alpha_j \in \{x_{t_j}, x'_{t_j}\}$, $t_j \in \{1,\ldots,n\}$ such that $t_i\ne t_j$ for all $i \ne j$. In other words, every variable occurs at most once in the minterm. The minterm $\alpha_{1}\ldots \alpha_{m}$ is equivalently described by the string $s = (s_1,\ldots,s_n) \in \Sigma^n$ such that \eq{\begin{cases}
			s_k = \zero, &\text{if}~x'_{t_k}~\text{is present in the product},\\
			s_k = \one, &\text{if}~x_{t_k}~\text{is present in the product},\\
			s_k = \star, &\text{otherwise}.
	\end{cases}}
	The \emph{weight} of a minterm is defined as the number of wildcards in the string representation, equal to $n$ minus the minterm's degree.
\end{definition}


\begin{example}
	Let $n = 5$. Then, the minterm $x_1x_3'$ corresponds to the string $\one\star\zero\star\star$.
\end{example}

\begin{definition}
	A minterm $\alpha_{1}\ldots \alpha_{m}$ is called an \emph{implicant} of a Boolean function $f: \{0,1\}^n \to \{0,1\}$, if $\alpha_{1}\ldots \alpha_{m}(x) = 1$ implies $f(x) = 1$ for all $x \in \{0,1\}^n$.
\end{definition}

\begin{definition}
	An implicant $\alpha_{1}\ldots \alpha_{m}$ of a Boolean function $f$ is said to be \emph{prime}, if no proper divisor of $\alpha_{1}\ldots \alpha_{m}$ is an implicant of $f$. Otherwise, it is said to be \emph{redundant}.
\end{definition}


\section{Classic Quine-McCluskey algorithm}
\label{sec:classic}
The Quine-McCluskey algorithm constructs all implicants of the function, removing redundant implicants along the way, so that the final remaining implicants are prime. The construction relies on the following observation about combining minterms.

\begin{observation}
	Let $\alpha_1\ldots \alpha_m$ and $\beta_1\ldots \beta_m$ be two minterms such that, for some $i \in \{1,\ldots,m\}$, it is $\alpha_i = \beta_i'$ and $\alpha_j=\beta_j$ for all $j \ne i$. Then, the sum of $\alpha_1\ldots \alpha_m$ and $\beta_1\ldots \beta_m$ is functionally equivalent to the minterm $\prod_{j\ne i} a_j$.	
\end{observation}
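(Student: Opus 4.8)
The plan is to reduce the statement to the elementary Boolean identity $P \land \alpha_i \lor P \land \alpha_i' \equiv P$ and then confirm it by pointwise evaluation on $\{0,1\}^n$. First I would set $P \eqdef \prod_{j \ne i}\alpha_j$, the product of the literals on which the two minterms agree (note the statement's $a_j$ is a typo for $\alpha_j$). Since $\alpha_i = \beta_i'$, the two literals $\alpha_i$ and $\beta_i$ act on the same variable $x_{t_i}$ and are each other's negation; combined with $\alpha_j = \beta_j$ for $j \ne i$, this lets both minterms factor, as functions on $\{0,1\}^n$, as $\alpha_1\ldots\alpha_m = P \land \alpha_i$ and $\beta_1\ldots\beta_m = P \land \alpha_i'$. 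Here a product of literals is read as a Boolean function in the obvious way: absent variables impose no constraint.

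Next I would expand the sum by distributivity of $\land$ over $\lor$, obtaining $\pround{P \land \alpha_i} \lor \pround{P \land \alpha_i'} = P \land \pround{\alpha_i \lor \alpha_i'}$. Because a literal together with its negation covers both possible values of $x_{t_i}$, we have $\alpha_i \lor \alpha_i' \equiv \fone$, so the expression collapses to $P = \prod_{j\ne i}\alpha_j$, which is precisely the claimed minterm.

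To make ``functionally equivalent'' fully precise I would back this algebraic step with a pointwise check: fix any $x \in \{0,1\}^n$. If $P(x) = \fzero$ then both minterms vanish and their OR is $\fzero = P(x)$. If $P(x) = \fone$ then $\alpha_i(x)$ is either $\fone$, making the first minterm $\fone$, or $\fzero$, in which case $\alpha_i'(x) = \fone$ and the second minterm is $\fone$; in both cases the OR equals $\fone = P(x)$. Hence the sum and $P$ agree on every input. I do not expect a real obstacle here; the only points requiring care are interpreting the literal products as functions on the full $n$-bit domain and observing that the hypothesis $\alpha_i = \beta_i'$ is exactly what forces the two differing literals onto a single variable, so that $P$ is a genuine common factor.
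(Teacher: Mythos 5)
Your proof is correct. The paper itself states this as an Observation with no proof at all (only a string-level restatement and an example follow it), and your argument --- factoring out the common product $P = \prod_{j\ne i}\alpha_j$, collapsing $(P \land \alpha_i) \lor (P \land \alpha_i')$ to $P$ via $\alpha_i \lor \alpha_i' \equiv 1$, and confirming pointwise --- is precisely the standard justification the paper treats as self-evident, including the correct remark that the hypothesis $\alpha_i = \beta_i'$ forces the two differing literals onto the same variable so that $P$ is a genuine common factor.
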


In terms of strings, two minterms can be combined if they differ exactly at one position, and the new minterm is obtained by setting the string at this position to $\star$.
\begin{example}
The sum of minterms $\zero\zero\one\one\star$ and $\zero\one\one\one\star$ is $\zero\star\one\one\star$.
\end{example}

The observation works in the other way too: any implicant with at least one wildcard can be constructed as a sum of two implicants each with one wildcard less. This means that all implicants can be constructed bottom up from implicants with no wildcards.

The idea of the algorithm is to start with implicants of degree $n$ (having no wildcards in their strings). Then, at $w$-th level, $w \in \{1,\ldots,n\}$, all implicants with $w$ wildcards are constructed by combining implicants from the previous level in pairs. When a new implicant is constructed, the two used implicants are marked redundant (and can be removed from memory after the level $w$ is fully finished). The pseudocode for the high-level procedure is given in \autoref{alg:qmc}.

\begin{algorithm}
\caption{Quine-McCluskey algorithm for finding all prime implicants}
\label{alg:qmc}

\Input truth table of a Boolean $f: \{0,1\}^n\to\{0,1\}$

\Output set $S \subseteq \Sigma^n$ of prime implicants of $f$

\begin{algorithmic}[1]
\State $L_0 \gets \{x \in \{0,1\}^n : f(x) = 1\} \subseteq \Sigma^n$
\For{$w \in \{1,\ldots,n\}$}
	\State $R \gets \emptyset$
	\State $L_w \gets \emptyset$
	\For{$
		s, t \in L_{w-1}
		~:~
		\exists i \in \{1,\ldots,n\}
		~
		s_i=\zero,t_i=\one~\text{and}~s_j = t_j~\text{for all}~j\ne i
	$}
		\State $R \gets R \cup \{s,t\}$
		\State $L_w \gets L_w \cup \{s + t\}$
	\EndFor
	\State $L'_{w-1} \gets L'_{w-1} \setminus R$
\EndFor
\State $L'_n \gets L_n$
\State $S \gets \bigcup_{w=0}^n L'_w$
\end{algorithmic}
\end{algorithm}

\paragraph{Implementation caveat}
So far, we have not considered how to find pairs of implicants that differ in exactly one position (as strings). This crucial step is performed in line 5 of \autoref{alg:qmc}. Unfortunately, there is a widespread misconception about this step, stemming from a heuristic used in the initial works of Quine and McCluskey, who did not focus on formal algorithmic complexity of the approach.
In fact, most works from 19xx describe various heuristic variants in a ``pen-and-paper'' style.
Yet, even modern implementations follow the inefficient approach, see for example \cite{WWW03,Jain08,MQM12,SimCodes14,DusaThiem15}. A few works that follow the efficient approach are \cite{GPU17}. In the following, we first discuss the inefficient approach and then describe the efficient variants.

\paragraph{Finding compatible pairs of implicants (inefficient)}
Since the strings $s,t$ differ only at one position, at which they are equal to $\zero$ and $\one$ respectively, it follows that the number of $\one$s in $s$ is by one less than that in $t$. It is thus commonly suggested to sort and group each $L_{w-1}$ by the number of $\one$s in the increasing order. Let $L_{w-1,u}$ denote the subset of strings from $L_{w-1}$ which have exactly $u$ symbols $\one$, $0 \le u \le n-w+1$. Then, for each compatible pair $s,t$ considered by the algorithm, it must be $s \in L_{w-1,u}$ and $t \in L_{w-1,u+1}$ for some $u$. It follows that, at step $w$, it is sufficient to only consider pairs from the set \eq{
(L_{w-1,0}\times L_{w-1,1})~\cup~
\ldots
(L_{w-1,n-w}\times L_{w-1,n-w+1})
}
The problem is that this observation is often interpreted as the direct method to find the compatible pairs, by enumerating all pairs from $L_{w-1,u} \times L_{w-1,u+1}$, for each fitting $u$. This approach however leads to a quadratic complexity blowup.

\begin{proposition}
	There exists an infinite family of functions $f$ such that \autoref{alg:qmc} implemented using the described heuristic has \eq{
		\pabs{
			L_{w-1,u} \times L_{w-1,u+1}
		} = \Omega\pround{\frac{6^n}{n^3}}.
	}
	for some index $u$.
\end{proposition}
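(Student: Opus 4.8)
The plan is to take the simplest dense witness, the constant family $f \equiv 1$, and to show that for a balanced choice of the level $w$ and the weight $u$ two neighbouring weight-classes are each of size $\Theta(3^n/n)$; their product then already beats the claimed bound.

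First I would identify the sets $L_{w-1}$ that the heuristic maintains on $f\equiv 1$. Every string of $\Sigma^n$ is a minterm, and every minterm is an implicant of the constant-$1$ function. The correctness reasoning recalled just before \autoref{alg:qmc} --- every implicant carrying at least one wildcard is the sum of two implicants with one fewer wildcard --- then yields, by induction on $w$, that the set $L_{w-1}$ present at the start of iteration $w$ is exactly the set of all strings of $\Sigma^n$ with precisely $w-1$ wildcards; here one uses that the deletions act on the survivor set, never on $L_{w-1}$ itself. Hence $L_{w-1,u}$ is the full collection of strings with $w-1$ wildcards and $u$ ones, so that
\[ \pabs{L_{w-1,u}} = \binom{n}{w-1}\binom{n-w+1}{u}, \]
counting first the placement of the wildcards and then of the ones among the remaining coordinates.

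Next I would fix the balanced parameters $w-1 = u = \floor{n/3}$, so that $L_{w-1,u}$ consists of the ternary strings with (about) equally many $\zero$, $\one$ and $\star$ symbols and
\[ \pabs{L_{w-1,u}} = \binom{n}{\floor{n/3}}\binom{n-\floor{n/3}}{\floor{n/3}}. \]
A direct Stirling estimate, using $\binom{n}{\floor{n/3}} = \Theta(2^{nH(1/3)}/\sqrt n)$ and $\binom{n-\floor{n/3}}{\floor{n/3}} = \Theta(2^{2n/3}/\sqrt n)$ together with $H(1/3) + 2/3 = \log_2 3$, shows this to be $\Theta(3^n/n)$. The adjacent class $L_{w-1,u+1}$ differs only by turning one $\zero$ into a $\one$ and is within a constant factor of the same size, hence also $\Theta(3^n/n)$. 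Multiplying,
\[ \pabs{L_{w-1,u}\times L_{w-1,u+1}} = \pabs{L_{w-1,u}}\cdot\pabs{L_{w-1,u+1}} = \Theta\pround{9^n/n^2}, \]
and since $9^n/n^2 \ge 6^n/n^3$ for all large $n$, the stated bound $\Omega(6^n/n^3)$ follows for this index $u$; as each $n$ gives a distinct function, the family is infinite.

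The only steps requiring care are the inductive identification of $L_{w-1}$ with the full $(w-1)$-wildcard layer and the Stirling bookkeeping that fixes the polynomial factor, both routine. I expect no genuine obstacle: the one substantive idea is to place $u$ and $w$ near the centre so that two neighbouring layers are simultaneously large, whereas an extreme choice of $u$ would make one factor tiny.
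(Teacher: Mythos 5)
Your proof is correct and takes essentially the same route as the paper: the constant-$1$ witness $f\equiv 1$ with the balanced choice $w-1\approx u\approx n/3$, so that a weight class is (up to the adjacent-class shift) the multinomial layer of size $\Theta(3^n/n)$. The only cosmetic difference is that you count both classes $L_{w-1,u}$ and $L_{w-1,u+1}$ directly via Stirling/entropy, obtaining $\Theta(9^n/n^2)$, whereas the paper estimates only the central class and lower-bounds the neighbouring one by a factor $1/(m+1)$, obtaining $\Omega(9^n/n^3)$ --- both of which dominate the stated $\Omega(6^n/n^3)$.
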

\begin{proof}
Consider the worst case for the Quine-McCluskey algorithm (in general), the constant-1 function $f=1$, for which all possible minterms are implicants (although only the minterm $1$ is prime).
For simplicity, we assume that $n=3m$ for some integer $m$.
Then, $L_{m,m}$ consists of strings from $\Sigma^n$ with exactly $m$ zeroes, $m$ ones, and $m$ wildcards, the number of which is 
\eq{
\frac{(3m)!}{m!m!m!} =
\Theta\pround{ 
\sqrt{m}\pround{\frac{3m}{e}}^{3m}
\cdot
\sqrt{m}^{-3}
\pround{\frac{m}{e}}^{-3m}
}
=
\Theta\pround{ 
\frac{3^{n}}{n}
}
}
for some constant $c$ (using Stirling's approximation). Since $L_{m,m}$ can be obtained by replacing one $\zero$ by $\one$ in each position in each string from $L_{m,m-1}$, we have $L_{m,m-1} \ge \frac{L_{m,m}}{m+1}$. It follows that for $w=m+1$, $u=m-1$, the set \eq{
	L_{w-1,u} \times L_{w-1,u+1}
}
has size $\Omega\pround{\frac{6^n}{n^3}}$.
\end{proof}

\paragraph{Finding compatible pairs of implicants (efficient)}

A correct implementation was described by Wegener \cite[Section 2.2]{Wegener87}, who suggested to iterate through the list $L_{w-1,u}$ and, for each string $s$ in it, check all strings that can be obtained by replacing exactly one $\one$ with $\zero$ inside the set $L_{w-1,u-1}$. Note that there are at most $n$ such strings to check. As a membership test, Wegener proposed to sort $L_{w-1,u-1}$ in advance and use binary search for each query. This leads to complexity $\mathcal{O}(3^nn^2)$ ($3^n$ maximum minterms, at most $n$ neighbors per a string, logarithmic cost of the binary search in $L_{w-1,u-1}$ of size at most $3^n$).
In practice, the membership test can be performed using a hash table in amortized constant time, leading to a probabilistic algorithm with expected time complexity of $\mathcal{O}(3^nn)$ operations on minterms ($n$-bit strings) and memory accesses.

Note that \emph{there is no benefit} for the asymptotic complexity in grouping elements of $L_{w-1}$ by the number of $\one$s: the cost of a membership test is asymptotically the same for sets of size $3^n$ and $3^n/n$ (both for binary search and hash-based methods). Although such grouping may improve locality of memory operations (and thus, improve the constant behind the asymptotic time complexity), it also makes the implementation more complex.


\section{DenseQMC: New bit-slice algorithm for the dense case}
\label{sec:dense}

This section describes a new algorithm for finding prime implicants, which exploits the ideas of Quine and McCluskey in a more careful implementation. It has complexity $\mathcal{O}(3^nn)$ of bit operations on \emph{any} function of $n$ variables, which is further reduced by a bit-slice style implementation allowing to maximally utilize the CPU registers.
Its high-level structure is based on iterating over \emph{dimensions} (variables), not over \emph{weights}. A similar idea was proposed already in \cite{Scheinman62}, but not sufficiently developed and detailed. Furthermore, we also employ and develop the idea of \cite{DusaThiem15} to consider the full possible state of size $3^n$ packed in a dense bit-vector similar to implementations in \cite{GPU17,FPGA20}. In addition, the core of the algorithm essentially describes a Boolean circuit, reminiscent of hardware accelerators proposed in \cite{FPGA19,FPGA20}.

\subsection{Multidimensional ternary transforms}

\paragraph{State representation}
The problem of finding prime implicants takes as input a subset of $\{0,1\}^n$ and outputs a subset of $\Sigma^n = \{\zero,\one,\star\}^n$. The input set can be also naturally embedded in $\Sigma^n$. This allows to represent the problem as a map from $2^{\Sigma^n}$ to itself. In the general (dense) case, it is efficient to represent a subset of $\Sigma^n$ as a bit-vector $S$ of size $3^n$ (i.e., $S \in \{0,1\}^{3^n}$). For now, we assume that bits in the memory are indexed by strings from $\Sigma^n$; concrete implementation is described in \autoref{sec:bit-slice}. For example, by $S(s)$ we denote the single bit indicating whether $s$ belongs to the set represented by $S$ or not. As we shall see, it is possible to implement the required mapping by manipulating such a state in-place by a fixed Boolean circuit.

The Quine-McCluskey algorithm can be directly reinterpreted to work on such a state. Essentially, we only change the data structure behind the sets $L_w$, which is a further evolution of the observation about efficient implementation in the previous section.

\newcommand\merge{\mathrm{Merge}}
\newcommand\reduce{\mathrm{Reduce}}
\newcommand\chizero{\chi_{\zero}}
\newcommand\chione{\chi_{\one}}
\newcommand\chistar{\chi_{\star}}

\begin{definition}
Define two Boolean operations \eq{
\merge &: \F_2^3 \to \F_2^3 : (\chizero,\chione,\chistar) ~\mapsto~
(\chizero,~ \chione,~ \chistar \lor (\chizero \land \chione)),\\
\reduce &: \F_2^3 \to \F_2^3 : (\chizero,\chione,\chistar)
~\mapsto~
(\chizero \land \lnot \chistar,~ \chione \land \lnot \chistar,~ \chistar).
}
\end{definition}
These operators can be viewed as maps from $2^{\Sigma}$ to itself. The $\merge$ operator will be used to combine two compatible minterms; the $\reduce$ operator will be used to remove redundant minterms.

\begin{theorem}
The Quine-McCluskey algorithm can be implemented as a Boolean circuit mapping the set $\{0,1\}^{3^n}$ to itself, interpreted as $2^{\Sigma^n}$. The circuit uses $3^{n-1}n$ Boolean OR and NOT gates, and $3^nn$ AND gates.
\end{theorem}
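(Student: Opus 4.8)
The plan is to exhibit the circuit explicitly as two phases acting in place on the state $S \in \{0,1\}^{3^n}$ indexed by $\Sigma^n$, to prove correctness of each phase by induction on the dimension index, and finally to count gates. I would first fix the encoding: initialize $S$ so that $S(x) = f(x)$ for $x \in \{0,1\}^n \subseteq \Sigma^n$ and $S(s) = 0$ for every $s$ containing at least one $\star$. Writing $s[i\to a]$ for $s$ with coordinate $i$ reset to $a \in \Sigma$, the $3^{n-1}$ triples $\pround{S(s[i\to\zero]),\,S(s[i\to\one]),\,S(s[i\to\star])}$ (one per assignment of the remaining $n-1$ coordinates) partition the state, so applying $\merge$ (resp. $\reduce$) to all of them simultaneously is well defined and in place. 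The circuit is the composition of $n$ merge layers (dimensions $1,\ldots,n$) followed by $n$ reduce layers.

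For the merge phase I would prove the invariant that after processing dimensions $1,\ldots,i$, a string $s$ can carry a $1$ only if all its wildcards lie in $\{1,\ldots,i\}$, and in that case $S(s) = \bigwedge_x f(x)$, where $x$ ranges over the $\{0,1\}^n$-completions of $s$. The base case $i=0$ is the initialization. In the inductive step, if $s_i \ne \star$ then $\merge$ leaves $S(s)$ untouched and the claim transfers directly; if $s_i = \star$, then the current value $S(s)$ is $0$ by the invariant, so $\merge$ sets $S(s) = S(s[i\to\zero]) \land S(s[i\to\one])$, which by the hypothesis is the conjunction of $f$ over the completions of the two children, i.e. over all completions of $s$. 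After dimension $n$ this gives $S(s)=1$ iff $s$ is an implicant of $f$; denote this indicator by $I$.

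The reduce phase is the main obstacle, since its layers overwrite entries in place and one must show that earlier prunings do not corrupt later ones. The key fact is the \emph{monotonicity} of $I$: replacing a $\star$ by a literal keeps an implicant an implicant, so $I(s[j\to\star])=1 \Rightarrow I(s)=1$, and $I$ is non-increasing as wildcards are added. I would prove by induction on $i$ that after the first $i$ reduce layers, $S(s) = I(s) \land \bigwedge_{j \le i,\, s_j \ne \star} \lnot I(s[j\to\star])$. The delicate case $s_i \ne \star$ yields a value of the form $A \land \lnot(B \land C)$, where $A$ is the claimed level-$(i{-}1)$ value of $s$, $B = I(s[i\to\star])$, and $C$ is the level-$(i{-}1)$ pruning mask of the starred string $s[i\to\star]$; expanding gives $(A \land \lnot B) \lor (A \land \lnot C)$, and applying monotonicity to the clauses $\lnot I(s[j\to\star])$ already present in $A$ forces $A \land \lnot C \le A \land \lnot B$, collapsing the expression to $A \land \lnot B$, exactly the level-$i$ formula. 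At $i=n$ this reads $S(s) = I(s) \land \bigwedge_{s_j \ne \star} \lnot I(s[j\to\star])$, which equals the prime-implicant indicator by the characterization of primality: a single-literal drop already witnesses redundancy (again via monotonicity, passing from any proper divisor that is an implicant to a single drop that is one).

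Finally I would count gates layer by layer. Each of the $n\,3^{n-1}$ merge triples costs one AND and one OR; each of the $n\,3^{n-1}$ reduce triples costs one NOT and two ANDs. Summing yields $n\,3^{n-1}$ OR gates, $n\,3^{n-1}$ NOT gates, and $n\,3^{n-1} + 2\,n\,3^{n-1} = n\,3^n$ AND gates, matching the stated counts. I expect the monotonicity-driven collapse in the reduce induction to be the only step requiring genuine care; the merge invariant and the arithmetic are routine once the triple decomposition is in place.
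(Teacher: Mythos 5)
Your proof is correct, but it takes a genuinely different route from the paper's. The paper proves this theorem by exhibiting the \emph{weight-ordered} circuit (\autoref{alg:qmc-bool}): triples are processed level by level in increasing number of wildcards, with a merge pass and a reduce pass interleaved at each level $w$, and correctness is argued by a step-by-step simulation of the classic list-based \autoref{alg:qmc} --- each $\mathrm{Merge}$ gate mirrors one combining step, each $\mathrm{Reduce}$ mirrors the removal of redundant implicants --- so the correctness burden is delegated to the known behaviour of the Quine--McCluskey procedure; the gate count ($n3^{n-1}$ triples, each costing one OR and one AND for $\mathrm{Merge}$ and one NOT plus two ANDs for $\mathrm{Reduce}$, giving $n3^{n-1}$ ORs and NOTs and $n3^n$ ANDs) is identical to yours. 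You instead construct the \emph{dimension-ordered} circuit with all $n$ merge layers preceding all $n$ reduce layers, and prove correctness from scratch: your merge invariant shows the first phase computes exactly the implicant indicator, and your reduce induction, with the monotonicity-driven collapse $A \land \lnot(B\land C) = A\land\lnot B$, shows the in-place dimension-wise prunings never corrupt one another. This is precisely the circuit the paper only introduces afterwards (\autoref{alg:qmc-dim}, the ``multidimensional transforms''), whose equivalence to \autoref{alg:qmc-bool} is asserted in a proposition given without proof, and whose separated merge/reduce semantics appear in a lemma proved only by ``follows from the dataflow.'' So your argument buys more than the paper's: besides establishing the theorem (as an existence statement about a circuit with the stated gate counts computing all prime implicants, it fully suffices), it simultaneously supplies the missing justification for the paper's lemma and reordering proposition --- the monotonicity argument is exactly what is needed to show that all reductions can be deferred until after all merges and traversed per coordinate. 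What the paper's route buys in exchange is brevity and a literal reading of ``implements the Quine--McCluskey algorithm,'' since its circuit transcribes \autoref{alg:qmc} gate for gate; if you want that reading too, one sentence noting that your circuit coincides with the weight-ordered one up to a reordering of independent gates (or simply that both compute the prime-implicant indicator) closes the gap.
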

\begin{proof}
See \autoref{alg:qmc-bool}.
Complexity: the algorithm considers all triples having $n-1$ positions equal and the other position taking all three possible values (in the fixed order ($\zero,\one,\star$). There are $n3^{n-1}$ such triples. The $\merge$ and $\reduce$ operations are applied exactly once per such a triple.

Correctness: it is easy to see that $I$ represents the same state as the set $L_0 \cup L_1 \ldots \cup L_w$ from \autoref{alg:qmc} at matching steps. More precisely, each iteration of the first inner loop in \autoref{alg:qmc-bool} has the same effect as an iteration of the inner loop in \autoref{alg:qmc}; the second inner loop in \autoref{alg:qmc-bool} has the same effect as the line $L_w \gets L_w \setminus R$ in \autoref{alg:qmc}.
\end{proof}
	
\begin{algorithm}
	\caption{Boolean circuit implementation of the Quine-McCluskey algorithm for finding all prime implicants}
	\label{alg:qmc-bool}
	
	\Input indicator set $I \in \{0,1\}^{3^n}$ of the support  of a Boolean function $f: \{0,1\}^n \to \{0,1\}$
	
	\Output indicator set $O\in \{0,1\}^{3^n}$ of the set $O\subseteq \Sigma^n$ of prime implicants of $f$
	
	\begin{algorithmic}[1]
		\State $S \gets I$ \Comment{working state}
		\For{$w \in \{1,\ldots,n\}$}
		\For{$s,t,u \in \Sigma^n:$
			\Statex\hspace*{4em} 
				$u~\text{has}~w~\text{wildcards}$,
			\Statex\hspace*{4em}
				$\exists i\in\{1,\ldots,n\}~	s_i=\zero,t_i=\one,u_i=\star~\text{and}~s_j = t_j = u_j~\text{for all}~j\ne i$
		}
			\State $(S(s), S(t), S(v)) \gets \merge(S(s), S(t), S(v))$
		\EndFor
		\For{$s,t,u \in \Sigma^n:$
			\Statex\hspace*{4em} 
			$u~\text{has}~w~\text{wildcards}$,
			\Statex\hspace*{4em}
			$\exists i\in\{1,\ldots,n\}~	s_i=\zero,t_i=\one,u_i=\star~\text{and}~s_j = t_j = u_j~\text{for all}~j\ne i$
		}
			\State $(S(s), S(t), S(v)) \gets \reduce(S(s), S(t), S(v))$
		\EndFor
		\EndFor
		\State $O \gets S$
	\end{algorithmic}
\end{algorithm}

\begin{remark}
	Lines 2 and 4 of \autoref{alg:qmc-bool} define wirings in the Boolean circuit, and so do not contribute to its complexity. Software implementation of it would require careful enumeration of minterms with $w$ wildcards for all values of $w$. It is not included here as the alternative implementation below is superior.
\end{remark}


The Quine-McCluskey algorithm (and its Boolean circuit implementation from \autoref{alg:qmc-bool})
traverses the space of minterms in the order of increasing number of wildcards $w$ in the associated strings (equal to the decreasing order of minterms' degree). The correctness of the processing order is crucial, since $\merge$ transfers information from lower to higher $w$, while $\reduce$ goes in the other direction. A crucial observation is that these two processes can be completely separated. More precisely, second inner loop in \autoref{alg:qmc-bool} (Lines 4-5) can be performed in a separate loop on $w$ following the first loop on $w$. However, the second loop has to iterate $w$ in the decreasing order. In terms of implicants, the interpretation is that we can first compute all the implicants of the function, and only then remove all the redundant implicants.

\begin{lemma}
Let $\mathrm{MergeAll},\mathrm{ReduceAll}: \{0,1\}^{3^n} \to \{0,1\}^{3^n}$ be maps such that, for all $I \in \{0,1\}^{3^n}$, $M = \mathrm{MergeAll}(I)$, $O=\mathrm{ReduceAll}(M)$, it holds:
\eq{
	M(u) &=
	\begin{cases}
	I(u) &~~\text{if}~u \in \{\zero,\one\}^n,\\
	\bigvee_{s,t\in\Sigma^n:~s+t=u} M(s)\land M(t) &~~\text{otherwise},
	\end{cases}\\
	O(s) &= M(s) \land \lnot
		\bigwedge_{u,t \in \Sigma^n:~s+t=u}
		M(u).
}
Then:
\begin{enumerate}
\item For all $u \in \Sigma^n$ with $w$ wildcards,
$M(u)=1$
if and only if $u \in L_w$ computed in \autoref{alg:qmc}
if and only if $I(u) = 1$ at the iteration $w$ before the second inner loop in \autoref{alg:qmc-bool}.

\item For all $u \in \Sigma^n$ with $w$ wildcards,
$O(u)=1$
if and only if $u \in L'_w$ computed in \autoref{alg:qmc}
if and only if $I(u) = 1$  at the iteration $w$ after the second inner loop in \autoref{alg:qmc-bool}.
\end{enumerate}
\end{lemma}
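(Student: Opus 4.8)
The plan is to prove both statements by induction on the number of wildcards $w$, establishing statement~1 (the $\mathrm{MergeAll}$ characterization) first and then obtaining statement~2 (the $\mathrm{ReduceAll}$ characterization) as an immediate consequence. Throughout I will use the key structural fact recorded in the Observation: a string $u$ with $w \ge 1$ wildcards admits a decomposition $u = s+t$ exactly by choosing one wildcard position $i$ of $u$ and setting it to $\zero$ in $s$ and to $\one$ in $t$; such $s$ and $t$ then have exactly $w-1$ wildcards each, and by the Observation $u$ is an implicant if and only if both $s$ and $t$ are.

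For statement~1 the base case $w=0$ is immediate: $M(u)=I(u)$ by definition of $\mathrm{MergeAll}$, the set $L_0$ is exactly the support $\{x : f(x)=1\}$ encoded by $I$ on $\{\zero,\one\}^n$, and in \autoref{alg:qmc-bool} a string with no wildcards is never written by any $\merge$ (the operator only updates the $\chistar$ component), so the working state at such a position never changes from its initial value. For the inductive step, fix $u$ with $w \ge 1$ wildcards and assume the three-way equivalence for all strings with fewer wildcards. The recursive clause $M(u) = \bigvee_{s,t:\,s+t=u} M(s)\land M(t)$ ranges exactly over the decompositions above, each with $s,t$ having $w-1$ wildcards, so by the induction hypothesis $M(s)=1 \iff s \in L_{w-1}$ and likewise for $t$. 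Hence $M(u)=1$ iff some compatible pair $s,t \in L_{w-1}$ merges to $u$, which is verbatim the rule by which \autoref{alg:qmc} inserts $u$ into $L_w$. The identification with the state of \autoref{alg:qmc-bool} follows because the first inner loop at level $w$ performs $S(u) \gets S(u) \lor (S(s)\land S(t))$ once for each wildcard position of $u$ while leaving every $(w-1)$-wildcard entry untouched; since by induction those entries already hold their $M$ values at this checkpoint, the aggregate effect writes exactly $M(u)$.

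Statement~2 then requires no separate induction. By the definition of a prime implicant, an implicant $s$ fails to be prime exactly when some minimal proper divisor --- obtained by turning one non-wildcard position of $s$ into $\star$, i.e.\ one of the parents $u$ with $u = s+t$ --- is itself an implicant. Replacing ``is an implicant'' by the value of $M$ via statement~1, we get that $s$ survives reduction iff $M(s)=1$ and none of these parents $u$ satisfies $M(u)=1$; this is precisely the value produced by $\mathrm{ReduceAll}$ and, by the same token, precisely the membership condition for $L'_w$ in \autoref{alg:qmc}, where the set $R$ removes from $L_w$ exactly those implicants consumed to build a parent at the next level. The boundary case $w=n$ is consistent: the single string $\star^n$ has no parent, the quantifier over parents is empty, and $\mathrm{ReduceAll}$ leaves $M(\star^n)$ unchanged, matching $L'_n = L_n$.

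I expect the main obstacle to be the correspondence with \autoref{alg:qmc-bool}, where $\merge$ and $\reduce$ are interleaved within each level $w$, whereas $\mathrm{MergeAll}$ and $\mathrm{ReduceAll}$ apply all merges before any reduction. The point to verify carefully is that this reordering is sound: a $\merge$ at level $w$ reads only $(w-1)$-wildcard entries and writes only a $w$-wildcard entry, while a $\reduce$ at level $w$ reads a $w$-wildcard entry and writes only $(w-1)$-wildcard entries. Consequently merges propagate information strictly upward in $w$ and reductions strictly downward, so no reduction can corrupt a value before it has been consumed by a merge. Making this monotone information-flow argument precise --- and tracking at exactly which checkpoint of \autoref{alg:qmc-bool} a given $w$-wildcard entry first attains its $M$ value and, one level later, its final $O$ value --- is the delicate part; everything else is a routine unwinding of the two recursive definitions against the two inner loops.
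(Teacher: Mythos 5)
Your proof is correct and takes the same route the paper intends: the paper's entire proof consists of the single sentence ``Follows from the dataflow in the algorithms,'' and your induction on the number of wildcards $w$ --- with the key invariant that $\merge$ writes only to $w$-wildcard entries while reading $(w-1)$-wildcard entries, and $\reduce$ does the reverse, so merges flow strictly upward and reductions strictly downward --- is precisely that dataflow argument made explicit, including the genuinely delicate checkpoint/off-by-one bookkeeping the paper leaves implicit. Note in passing that you silently (and correctly) read the paper's displayed formula $O(s) = M(s) \land \lnot\bigwedge_{u,t:\,s+t=u} M(u)$ as $M(s) \land \bigwedge_{u,t:\,s+t=u} \lnot M(u)$, which is what the $\reduce$ semantics and the primality condition actually require.
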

\begin{proof}
Follows from the dataflow in the algorithms.
\end{proof}

%

We are now ready to explain the idea of multidimensional transforms, which essentially consists in switching the traversal order of the minterms from monotonic in the number of wildcards to traversal of each coordinate (``dimension'') at a time. In other words, the same operations \eq{
(S(s), S(t), S(v)) &\gets \merge(S(s), S(t), S(v)),\\
(S(s), S(t), S(v)) &\gets \reduce(S(s), S(t), S(v))
}
are performed, but on a differently ordered sequence of the triples $(s,t,v)$. The idea is to first iterate over the position $i$ where $s_i=\zero,t_i=\one,v_i=\star$. Note that all such triples are independent as any such pair of triples would differ in at least one one other position. All such triples can be enumerated for example by going through the whole string space $\Sigma^n$ filtered by $i$-th coordinate equal to $\zero$, and obtaining the respective $t,v$ by changing this coordinate appropriately.

The Boolean circuit utilizing the alternative traversal order is given by \autoref{alg:qmc-dim}.

\begin{algorithm}
\caption{Boolean circuit implementation of the Quine-McCluskey algorithm for finding all prime implicants (multidimensional transforms)}
\label{alg:qmc-dim}

\Input indicator set $I \in \{0,1\}^{3^n}$ of the support  of a Boolean function $f: \{0,1\}^n \to \{0,1\}$

\Output indicator set $O\in \{0,1\}^{3^n}$ of the set $O\subseteq \Sigma^n$ of prime implicants of $f$
	
\begin{algorithmic}[1]
\Statex $\mathrm{MergeAll}:$
\State $S \gets I$
\For{$i \in \{1, \ldots, n\}$}
\For{$s \in \Sigma^n: s_i = \zero$}
	\State $t \gets s$ with $i$-th position set to $\one$
	\State $u \gets s$ with $i$-th position set to $\star$
	\State $(S(s), S(t), S(v)) \gets \merge(S(s), S(t), S(v))$
\EndFor
\EndFor
\State $M \gets S$
\Statex $\mathrm{ReduceAll}:$
\State $S \gets M$
\For{$i \in \{1, \ldots, n\}$}
\For{$s \in \Sigma^n: s_i = \zero$}
\State $t \gets s$ with $i$-th position set to $\one$
\State $u \gets s$ with $i$-th position set to $\star$
\State $(S(s), S(t), S(v)) \gets \reduce(S(s), S(t), S(v))$
\EndFor
\EndFor
\State $O \gets S$
\end{algorithmic}
\end{algorithm}

\begin{proposition}
	\autoref{alg:qmc-bool} and \autoref{alg:qmc-dim} describe the same circuits (up to reordering gates).
\end{proposition}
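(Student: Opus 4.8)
The plan is to read the statement as a purely structural claim about Boolean circuits: a circuit here is the collection of gate gadgets together with the registers (individual bits of the state $S$) that each gadget reads and writes, and "up to reordering gates" means the two algorithms emit exactly the same such collection, differing only in the order in which the gadgets are listed and fired. Each gadget is a single application of $\merge$ or $\reduce$ to a triple $(s,t,u) \in (\Sigma^n)^3$, so it suffices to show that the two algorithms enumerate the same set of triples, once for $\merge$ and once for $\reduce$, and that every triple induces identical register wirings in both.

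First I would fix a canonical index for a triple. Any admissible triple is determined by the distinguished position $i \in \{1,\ldots,n\}$ at which $(s_i,t_i,u_i) = (\zero,\one,\star)$, together with the common context $c \in \Sigma^{n-1}$ recording the shared symbols $s_j = t_j = u_j$ at the remaining positions; conversely each pair $(i,c)$ yields exactly one admissible triple. I would then verify that both inner enumerations range over precisely $\{(i,c) : 1 \le i \le n,\ c \in \Sigma^{n-1}\}$, each element once. For \autoref{alg:qmc-dim} this is immediate: the outer loop picks $i$, the inner loop picks $s$ with $s_i = \zero$, and $c$ is the restriction of $s$ to the positions $j \ne i$, so $s \mapsto c$ is a bijection. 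For \autoref{alg:qmc-bool} the triples are grouped by the number of wildcards $w$ of $u$; since $u_i = \star$ and $u$ agrees with $c$ elsewhere, one has $w = 1 + \wt(c)$, so the context fixes $w$. Letting $w$ run over $\{1,\ldots,n\}$ therefore sweeps, with neither repetition nor omission, over all contexts $c$ grouped by their own weight, and across all $i$ this realizes each $(i,c)$ exactly once.

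With the bijection established I would conclude the structural identity. For a fixed index $(i,c)$ the associated string triple $(s,t,u)$ is the same in both algorithms, so the $\merge$ gadget reads $S(s),S(t),S(u)$ and writes $S(u)$ identically, and the $\reduce$ gadget reads $S(s),S(t),S(u)$ and writes $S(s),S(t)$ identically. Hence \autoref{alg:qmc-bool} and \autoref{alg:qmc-dim} consist of exactly the same merge gadgets and the same reduce gadgets attached to the same registers; the only difference is the firing order. \autoref{alg:qmc-bool} orders by $w$ and interleaves a merge pass and a reduce pass within each $w$, whereas \autoref{alg:qmc-dim} orders by $i$ and performs all merges before all reduces. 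The legitimacy of pulling the reduce pass out after the merge pass is precisely what the preceding Lemma records, so that component of the reordering is already accounted for.

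I expect the main obstacle to be the \autoref{alg:qmc-bool} side of the bijection: one must argue carefully that indexing triples by the wildcard count $w$ of $u$ partitions the context space $\Sigma^{n-1}$ according to $\wt(c)$ with nothing double-counted and nothing missed, so that the weight-ordered enumeration and the dimension-ordered enumeration hit the same index set. I would also keep explicit that the claim concerns the gate collection and its register connections, not the dynamic values flowing through them: whether the two firing orders yield the same final state is a separate question, controlled by the merge/reduce separation of the Lemma together with the commutativity of the per-dimension transforms (and by the within-dimension independence noted before \autoref{alg:qmc-dim}, namely that triples sharing a fixed $i$ touch disjoint registers), and it is not needed to certify that the two descriptions are the same circuit up to reordering.
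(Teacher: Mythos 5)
Your proof is correct, and it is worth noting that the paper states this proposition with no proof at all: the author relies on the preceding discussion (the $\mathrm{MergeAll}/\mathrm{ReduceAll}$ Lemma and the remark that triples sharing a distinguished position touch pairwise disjoint state bits). Your argument is therefore a formalization of what the paper leaves implicit rather than a variant of an existing proof, and its core --- indexing admissible triples by the pair $(i,c)$ with $c \in \Sigma^{n-1}$, then checking that the weight-stratified enumeration of \autoref{alg:qmc-bool} (via $w = 1 + {}$the number of wildcards of $c$) and the dimension-stratified enumeration of \autoref{alg:qmc-dim} each hit every pair exactly once --- is exactly the right bookkeeping. Do add the one-line verification that $i$ is determined by the triple (if positions $i \neq i'$ both satisfied the condition, the clause for $i'$ would force $s_i = t_i$, contradicting $s_i = \zero$, $t_i = \one$), so that your indexing is a genuine bijection and no triple is double-counted within a pass.

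One caution on your closing remark: crediting the semantic harmlessness of the reordering to ``commutativity of the per-dimension transforms'' is not tenable, because these gate multisets do not commute under reordering in general. Concretely, for $n=2$ and $f \equiv 1$, firing the same six $\reduce$ gates in \emph{decreasing} weight order zeroes $\zero\star$, $\one\star$, $\star\zero$, $\star\one$ first, after which the four $w=1$ gates act vacuously and all four weight-$0$ minterms survive next to $\star\star$ --- a wrong output --- whereas increasing weight order and the dimension order of \autoref{alg:qmc-dim} are both correct. (The same example shows that the paper's own parenthetical requiring the separated reduce loop to iterate $w$ ``in the decreasing order'' is reversed.) Likewise, the dependency DAGs of the two algorithms genuinely differ: in \autoref{alg:qmc-dim} the $\merge$ gate for the triple $(\zero\star,\one\star,\star\star)$ fires during dimension $1$ and reads $S(\zero\star)$ before dimension $2$ has written it, while in \autoref{alg:qmc-bool} the same gate reads the fully merged value; agreement of the final states rests on the specific semantics (downward closure of implicants, zero-initialized wildcard entries, and the Lemma's characterization), not on gate-level commutation. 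You scoped the proposition correctly as a claim about the gate multiset attached to the same registers and deferred the semantic question to the Lemma, so your proof stands as written; just replace the appeal to commutativity with that dataflow argument.
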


Although the circuits produced by the two algorithms are equivalent, the new traversal order is more regular and easier to implement in software. More importantly, it allows efficient bit-slice implementation described below.

\subsection{Bit-slicing the transforms}
\label{sec:bit-slice}

While the Boolean circuit structure may be useful for theoretic purposes, for practical purposes it favors hardware implementations, which does not make sense for this kind of a problem. Therefore, it is crucial to optimize a software implementation of the algorithm. For this purpose, we employ the bit-slicing technique, which is (in particular) commonly used in implementations and designs of symmetric-key cryptographic primitives such as block ciphers \cite{FSE:GLSV14}.

The idea is that CPU instructions perform operations on full registers, not on single bits. The most common general register size is $\omega=64$ bits, while there are vector extensions processing even more bits at once. Essentially, a bitwise operation such as OR, AND, NOT can be performed on $\omega$ bits in parallel by a single instruction. A straightforward implementation of a Boolean circuit in software would waste this potential. Furthermore, a Boolean circuit needs to have a very regular structure to be efficiently bit-sliceable.
On the other hand, it is easy to perform \emph{batch} executions for any Boolean circuit: the $i$-th bit in each register is simply associated to the $i$-th parallel execution. Besides loading the inputs and reconstructing the outputs (which require single bit manipulations), the algorithm itself can be executed by simply translating the circuit into a straight-line program (using arbitrary topological order). In the following, we will use both the batch method and a custom bit-slice optimization relying on circuit's regular structure.

\paragraph{Data structure}
If the working state $S \in \{0,1\}^{3^N}$ is represented by a single bit-stream in the memory. 

We represent the working state $S \in \{0,1\}^{3^n}$ in a two-layer approach. In both layers, the strings over $\Sigma$ are treated as base-3 numbers (with $\star$ representing the digit 2). For both layers, we employ 0-based indexes, as they are more natural for algorithmic computations.

\begin{notation}
	For a string $s \in \Sigma^n$, we associate the integer $\rho(s) = \sum_{i=1}^n 3^{i-1} \rho(s_i)$, where $\rho(\zero)=0,\rho(\one)=1,\rho(\star) = 2$.
\end{notation}

The bottom layer covers $h$ dimensions, $1 \le h \le n$, while the top layer covers the remaining $n-h$ dimensions.
The whole working state consists of $3^{n-h}$ independent subsets of $\Sigma^{h}$, each represented by a contiguous bit-stream of length $3^{h}$ closely fitting a single CPU register or a small number of registers. All such bit-streams are always aligned to the register's least significant bit and padded to the register size. When such bit-streams are stored in memory, we will call them \emph{blocks}. Unused bits in blocks are not employed (i.e., are wasted). Therefore, it is desirable to choose a block size equal to a small multiple of the register size close to the maximum power of 3 it fits. Note that a large multiple implies bigger overhead on bitwise shifts.

\begin{example}
The 32-bit blocks are well fitted by $h=3$: 27/32 bits are used (15\% of used memory is wasted). The 64-bit blocks do not add any value over 32-bit blocks, while 128-bit blocks waste about 37\% of memory. The best fit on practice are 256-bit blocks: with $h=5$ only 5\% of memory is wasted, it is a small multiple of 64-bit register size provided by modern high-performance CPUs and also matches the bit-size of the AVX2 vector extension.
\end{example}

\begin{notation}
The block in a bottom layer will be denoted by $\mathrm{Block} \in \{0,1\}^{3^{h}}$. We recall that it represents a subset of $\Sigma^{3^{h}}$ by its indicator bit-string of length $3^{h}$, while requiring memory of $2^{\ell}$ bits, where $\ell$ is smallest integer such that $2^{\ell}\ge 3^{h}$.

A working state $S$ consists of $3^{n-h}$ blocks of size $3^{h}$ bits.
For $a \in \Sigma^{n-h}$ we let $S[a]$ denote the $\mathrm{Block}$ indexed by $\rho(a)$, and for $b \in \Sigma^{h}$ we let $S[a][b]$ denote the bit indexed by $\rho(b)$ in the block $S[a]$. It corresponds to the indicator bit for the minterm string $(a || b)$, where $||$ denotes the concatenation.
\end{notation}

\paragraph{Processing the top layer}
The top layer is straightforward to process as it can be seen as batch Boolean circuit evaluation, described above. To process $i$-th dimension, $1 \le i \le n-h$, we consider all triples of bits at distance $3^{i-1}$, non-intersecting and properly aligned.
This can be done by considering groups (of blocks) of size $3^i$. Inside each block at offset $b$, we consider first $3^{i-1}$ offsets $c$ inside the block, which correspond to minterms-strings having $\zero$ at position $i$. The respective strings having $\one$ and $\star$ at position $i$ are obtained by adding $3^{i-1}$ to the index. Thus, the 3 indexes corresponding to these strings are given by \eq{
\label{eq:indexes}
(i_s,i_t,i_u) \gets (b+c,~ b+c+3^{i-1},~ b+c+2\cdot3^{i-1}),
}
where $b$ takes values $0,3^i,2\cdot 3^i, \ldots, 3^{n-h}-3^i$; $c$ takes values $0,1,\ldots,3^{i-1}-1$.

\paragraph{Processing the bottom layer}
In the bottom layer, the same procedure is applied to each block independently. Recall that a block is essentially a register $v \in \{0,1\}^{3^h}$ storing the indicator vector of a subset of $\Sigma^{h}$ having $3^h$ bits, and extra unused bits which we can ignore. When processing a dimension $i$ covered by the bottom layer, we can extract the subset of relevant indicator bits by a computing AND with a fixed bitmask. For example, the $\zero$-valued positions in the lowest dimension  correspond to the mask
\eq{
	\mu_1 = (0, \ldots, 0,~ 0,0,1,~ 0,0,1,~ \ldots,~ 0, 0, 1)
}
(most-significant bits first), where the first group of zeroes corresponds to the padding (unused memory). The $\one$-valued positions can then be extracted by the same mask shifted by 1 position to the left:
\eq{
	(\mu_1 \lll 1) = (0, \ldots, 0,~ 0,1,0,~ 0,1,0,~ \ldots,~ 0, 1, 0),
}
and the $\star$-valued positions are given by one more shift:
\eq{
	(\mu_1 \lll 2) = (0, \ldots, 0,~ 1,0,0,~ 1,0,0,~ \ldots,~ 1, 0, 0),
}
Since the extracted bits need to be aligned to allow computations, it is more convenient to shift the register before applying the mask. Then, we can apply the $\merge/\reduce$ operations as:
\eq{
	s ~\gets~& (v \ggg 0) \land \mu_1,\\
	t ~\gets~& (v \ggg 1) \land \mu_1,\\
	u ~\gets~& (v \ggg 2) \land \mu_1,\\
(s,t,u) ~\gets~& \merge(s, t, u),\\
	v ~\gets~& (s \lll 0) \lor (t \lll 1) \lor (u \lll 2).
}
In this way, the operation is applied to $3^{h-1}$ bit triples at once. This is possible due to the regular structure of the algorithm, allowing aligning bits by simple shift operation.
This approach is easily generalized to any dimension $i$,
by computing the mask $\mu_i$ via summing bits at indexes given by $i_s$ from \eqref{alg:bitslice}, and replacing the shift amount by $3^{i-1}$.

\begin{notation}
	The $i$-th mask $\mu_i \in \{0,1\}^{3^h}$ is defined as  \eq{
		\mu_i = \sum_{b=0}^{3^h-i} \sum_{c=0}^{3^{i-1}} e_{3^ib + c},
	}
	where $e_j$ denotes the unit vector with $j$-th rightmost bit equal to 1.
\end{notation}

Note that, compared to the batch processing of the top layer, which almost fully exploits the available register (excluding only the unused ``padding'' bits), an operation in the bottom layer is performed on approximately only 1/3 of the register's bits in parallel. In addition, the shift and mask operations slow down the process even further. Yet, all this operations are performed locally on a single block, where as the top-layer processing requires more expensive memory accesses.

The full algorithm processing the two layers is given in \autoref{alg:bitslice}. The bit-slicing optimization can be summarized as full exploitation of a machine's word size, and generalized to the following theoretical result.

\begin{theorem}
In the RAM computational model with $\omega$-bit words, all prime implicants of an explicitly given Boolean set over $n$-bits can be computed in $\mathcal{O}(3^nn/\omega)$ bitwise word operations and memory accesses over a storage of $\mathcal{O}(3^n/\omega)$ words.
\end{theorem}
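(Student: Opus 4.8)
The plan is to read the bound directly off \autoref{alg:bitslice} (equivalently, off the dimension-ordered circuit of \autoref{alg:qmc-dim}), treating correctness as already settled and spending the effort on an operation count. Correctness comes for free from the earlier results: by the $\mathrm{MergeAll}/\mathrm{ReduceAll}$ lemma the two passes may be separated and run in the stated order, by the preceding proposition the dimension-wise traversal realizes the same circuit as \autoref{alg:qmc-bool}, and that circuit was already shown to compute all prime implicants. So it remains only to bound the storage and the number of word operations and memory accesses of the two-layer implementation.

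For the storage bound I would observe that the entire working state is the $3^n$-bit indicator vector $S$, stored as $3^{n-h}$ blocks, each a $3^h$-bit string padded up to the least $2^\ell \ge 3^h$. Since $2^\ell < 2\cdot 3^h$, the total number of stored bits is below $2\cdot 3^n$, so the padding costs only a constant factor and the state occupies $\mathcal{O}(3^n/\omega)$ words, as claimed.

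For the time bound the key observation is structural: for a fixed dimension $i$ and a fixed pass (either $\merge$ or $\reduce$), the triples $(s,t,u)$ partition the coordinate so that every bit of $S$ lies in exactly one triple. Hence a single (dimension, pass) touches $\Theta(3^n)$ bits, each a constant number of times, and the whole question is whether these bits can be handled at an amortized rate of $\mathcal{O}(1/\omega)$ word operations. In the top layer ($i \le n-h$) a pass is a batch of bitwise $\merge/\reduce$ gates applied to whole blocks: there are $3^{n-h-1}$ block-triples, each costing $\mathcal{O}(3^h/\omega)$ word operations and memory accesses, for $\mathcal{O}(3^{n-1}/\omega)$ per dimension. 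In the bottom layer ($i \le h$) a pass is, per block, a fixed sequence of shifts, a mask by the constant $\mu_i$, the gate operations, and a recombination, i.e. $\mathcal{O}(3^h/\omega)$ word operations applied to each of the $3^{n-h}$ blocks, again $\mathcal{O}(3^n/\omega)$ per dimension. Summing $\mathcal{O}(3^n/\omega)$ over the $n$ dimensions and the two passes yields $\mathcal{O}(3^n n/\omega)$ word operations and memory accesses.

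The only place that needs genuine care is the bottom-layer accounting, where the shift-and-mask trick evaluates all $3^{h-1}$ triples inside one block simultaneously yet keeps only about a third of the register's bits live at any instant. I would stress that this under-utilization is merely a constant-factor loss: each block is still handled with a fixed number of word-wise instructions per dimension, independent of $n$, and the padding to $2^\ell$ bits inflates $3^h$ by less than a factor two, so neither effect changes the asymptotics. A final detail worth checking is that the masks $\mu_i$ and the alignment shift by $3^{i-1}$ are fixed data for each $i$ and introduce no hidden dependence on $n$ beyond the per-block word count, which secures the clean $\mathcal{O}(3^n n/\omega)$ bound.
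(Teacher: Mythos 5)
Your proposal is correct and takes essentially the same route as the paper, which states this theorem with no separate proof, treating it as a summary of the two-layer implementation in \autoref{alg:bitslice}: correctness delegated to the earlier circuit-equivalence results, storage bounded by the padded blocks (your $2^{\ell} < 2\cdot 3^{h}$ observation), and time bounded by counting $\mathcal{O}(3^{n}/\omega)$ word operations per dimension per pass across both layers. Your explicit handling of the bottom layer's one-third register utilization and of the fact that padding and fixed masks $\mu_i$ cost only constant factors simply makes rigorous what the paper leaves implicit.
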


\begin{algorithm}
	\caption{Bit-slice implementation of the algorithm for finding all prime implicants}
	\label{alg:bitslice}
	
	\Input the support $I \subseteq \{0,1\}^{n}$ of a Boolean function $f: \{0,1\}^n \to \{0,1\}$
	
	\Output set $O\subseteq \Sigma^n$ of prime implicants of $f$
	(in the two-layer data structure)
	
	\begin{algorithmic}[1]
		\Statex Load input:
		\State $S \gets$ Two-layer data structure
		\For{$x \in I$}
			\State $(a, b) \in \Sigma^{n-h}\times \Sigma^{h} \gets x$ as a $\Sigma$-string
			\State $S[\rho(a)][\rho(b)] \gets 1$
		\EndFor
		\Statex
		\Statex $\mathrm{MergeAll}:$
		\Statex $\triangle$ Processing top layer (batch processing)
		\For{$i \in \{1, \ldots, n-h\}$}
		\For{$b \in 3^i\cdot \{0, \ldots, 3^{n-h-i}-1\}$}
		\For{$c \in \{0, \ldots, 3^{i-1}-1\}$}
			\State $(i_s,i_t,i_u) \gets (b+c,~ b+c+3^{i-1},~ b+c+2\cdot3^{i-1})$
			\State $(S[i_s], S[i_t], S[i_u]) \gets \merge(S[i_s], S[i_t], S[i_u])$
				\Comment{bit-slice operation on blocks}
		\EndFor
		\EndFor
		\EndFor
		\Statex $\triangle$ Processing bottom layer (bit-masks manipulation on single blocks)
		\For{$a \in \{0,\ldots,3^{n-h}-1\}$} \Comment{iterate over all blocks}
			\For{$i \in \{1,\ldots,h\}$}
			\State $s \gets (S[a] \ggg 0\cdot 3^{i-1}) \land \mu_i$
			\State $t \gets (S[a] \ggg 1\cdot 3^{i-1}) \land \mu_i$
			\State $u \gets (S[a] \ggg 2\cdot 3^{i-1}) \land \mu_i$
			\State $(s,t,u) \gets \merge(s, t, u)$
			\State $S[a] \gets (s \lll 0\cdot3^{i-1}) \lor (t \lll 1\cdot 3^{i-1}) \lor (u \lll 2\cdot3^{i-1})$
		\EndFor
		\EndFor
		%
		%
		\Statex
		\Statex $\mathrm{ReduceAll}:$
		\State ~~~~Same operations as in the $\mathrm{MergeAll}$ procedure above,
		\Statex ~~~~but with $\reduce$ instead of $\merge$.
		
		\Statex
		\Statex Extracting output:
		\State $O \gets \emptyset$
		\For{$a \in \Sigma^{n-h},~ b \in \Sigma^h$}
			\If{$S[\rho(a)][\rho(b)] = 1$}
				\State $O \gets O \cup (a||b)$
			\EndIf
		\EndFor
	\end{algorithmic}
\end{algorithm}


\section{SparseQMC: Optimized hash-based implementation for the sparse case}
\label{sec:sparse}

For a fair comparison, we implemented the classic Quine-McCluskey algorithm using hash-based neighbor search and various optimizations, which we describe in this section. This implementation is particularly useful in the sparse case, including the case of random functions with density of 50\%.

There also exist a few other heuristic algorithms for the sparse case \cite{SCLL70,Sen83}, which however are not sufficiently detailed to be directly and efficiently implemented. For example, these methods often involve manipulation of formulas defining the function per each prime implicant

\paragraph{Bit-based representation of minterms}
Each minterm is represented by a $2n$-bit value (fitting a 64-bit machine word when $n \le 31$, assuming 2 bits are reserved for control flags). Each 2-bit chunk represents a symbol from $\Sigma$ by correspondences 
$\mathtt{00}_2 \mapsto \zero$,
$\mathtt{01}_2 \mapsto \one$,
$\mathtt{10}_2 \mapsto \star$.
This allows efficient minterm manipulation and hashing by bitwise operations. A similar idea was proposed in \cite{Scheinman62,Jain08}.

\paragraph{Avoiding repeated implicant generation}
Each implicant containing $w$ wildcards can be constructed in $w$ possible pairs of implicants with $w-1$ wildcards (per each wildcard position). To avoid duplicates, it is sufficient to merge minterms on positions that don't have preceding wildcards. For example, $\one\star\star$ can be merged from $(\one\zero\star,\one\one\star)$ and from $(\one\star\zero,\one\star\one)$. The latter one can be skipped as there is a wildcard preceding the merging position 3. However, such skipped pairs might not be marked as redundant as they should. Therefore, an extra pass over all new implicants is required to mark all $w$ possible merge pairs as redundant.

This optimization allows, at step $w$, to first collect all new implicants in a list (which is more efficient than a hash-table due to sequential memory access), and then to convert it to the hash-based set $L_w$ (requiring 1 random memory access per element if linear probing is used, see below). Then, all $w$ merge pairs of each new implicant has to be marked redundant (removed from $L_{w-1}$).

\paragraph{Custom hash-based set}
As described above, the step $w$ of \autoref{alg:qmc} can be processed in two separate stages: generating the full set of implicants $L_w$, and removing redundant implicants from $L_{w-1}$. A hash-table with linear probing reduces the amount of random memory accesses, which is crucial for large memory structures. While generally linear probing becomes complicated when elements are removed, the two-stage process simplifies the implementation: when elements are removed, they can be simply marked in the hash-space (e.g., by using a control bit) without performing the expensive shrinking procedure requiring to rehash the whole set. At the end of the step, non-redundant (prime) implicants are collected and the set can be cleared to free the memory.


\section{Experimental evaluation}

We implemented the new dense algorithm and the optimized sparse variant in the C++ programming language. In addition, for the dense algorithm, we added obvious optimizations, such as unrolling the bottom layer dimension loop and merging the bottom layers of $\merge$ and $\reduce$ (using the fact that the two $\reduce$ layers are independent and can be swapped). The optimized version performs about 3 times faster than the basic one (measured on $n=22$).


Experiments were performed on a single core of an AMD EPYC 3.2 GHz CPU, run on the Ubuntu Server OS \footnote{\url{https://ubuntu.com/download/server}} inside a QEMU \footnote{\url{https://www.qemu.org/}} virtual machine, with 1 TiB of RAM available.
The code was compiled using the GNU \texttt{g++} compiler (version 9.4.0).
Time measurements and memory requirements are listed in \autoref{tab:benchmark} and illustrated in \autoref{fig:benchmark}. Measurements were done on single runs due to time constraints.

The source code is publicly available at
\begin{center}
	\href{https://github.com/hellman/Quine-McCluskey}{github.com/hellman/Quine-McCluskey}
\end{center}


%



	\if\Class0
	    \bibliographystyle{splncs04}
	\else
	    \bibliographystyle{alpha}
	\fi
    \bibliography{bib/abbrev3,bib/custom,bib/crypto}

\begin{thebibliography}{10}
\providecommand{\url}[1]{\texttt{#1}}
\providecommand{\urlprefix}{URL }
\providecommand{\doi}[1]{https://doi.org/#1}

\bibitem{SimCodes14}
Banerji, S.: Computer simulation codes for the {Q}uine-{M}c{C}luskey method of
  logic minimization. CoRR  \textbf{abs/1404.3349} (2014)

\bibitem{ToSC:BouCog20}
Boura, C., Coggia, D.: Efficient {MILP} modelings for sboxes and linear layers
  of {SPN} ciphers. {IACR} Trans. Symm. Cryptol.  \textbf{2020}(3),  327--361
  (2020). \doi{10.13154/tosc.v2020.i3.327-361}

\bibitem{VLSI84}
Brayton, R.K., Hachtel, G.D., McMullen, C.T., Sangiovanni-Vincentelli, A.L.:
  Logic {Minimization} {Algorithms} for {VLSI} {Synthesis}, The {Kluwer}
  {International} {Series} in {Engineering} and {Computer} {Science}, vol.~2.
  Springer US, Boston, MA (1984). \doi{10.1007/978-1-4613-2821-6}

\bibitem{Dusa18}
Du{\cb{s}}a, A.: Consistency cubes: a fast, efficient method for exact
  {B}oolean minimization. R J.  \textbf{10}, ~357 (2018)

\bibitem{DusaThiem15}
Du{\cb{s}}a, A., Thiem, A.: Enhancing the minimization of {B}oolean and
  multivalue output functions with {eQMC}. The Journal of Mathematical
  Sociology  \textbf{39},  108 -- 92 (2015)

\bibitem{FCN61}
Fridshal, R., Cobham, A., North, J.H.: An application of linear programming to
  the minimization of {B}oolean functions. In: 2013 IEEE 54th Annual Symposium
  on Foundations of Computer Science. pp.~3--9. IEEE Computer Society, Los
  Alamitos, CA, USA (oct 1961). \doi{10.1109/FOCS.1961.5}

\bibitem{GYWL15}
Gao, C., Yao, X., Weise, T., Li, J.: An efficient local search heuristic with
  row weighting for the unicost set covering problem. Eur. J. Oper. Res.
  \textbf{246},  750--761 (2015)

\bibitem{FSE:GLSV14}
Grosso, V., Leurent, G., Standaert, F.X., Varici, K.: {LS}-designs: Bitslice
  encryption for efficient masked software implementations. In: Cid, C.,
  Rechberger, C. (eds.) FSE~2014. {LNCS}, vol.~8540, pp. 18--37. Springer,
  Heidelberg (Mar 2015). \doi{10.1007/978-3-662-46706-0_2}

\bibitem{MQM12}
Jadhav, V., Buchade, A.: Modified {Q}uine-{M}c{C}luskey method. CoRR
  \textbf{abs/1203.2289} (2012)

\bibitem{Jain08}
Jain, T.K., Kushwaha, D.S., Misra, A.K.: Optimization of the
  {Q}uine-{M}c{C}luskey method for the minimization of the {B}oolean
  expressions. In: Fourth International Conference on Autonomic and Autonomous
  Systems (ICAS'08). pp. 165--168 (2008). \doi{10.1109/ICAS.2008.11}

\bibitem{KS12}
Kagliwal, A., Balachandran, S.: Set-cover heuristics for two-level logic
  minimization. 2012 25th International Conference on VLSI Design pp. 197--202
  (2012)

\bibitem{LMSZLL20}
Lin, W., Ma, F., Su, Z., Zhang, Q., Li, C., L\"{u}, Z.: Weighting-based
  parallel local search for optimal camera placement and unicost set covering.
  In: Proceedings of the 2020 Genetic and Evolutionary Computation Conference
  Companion. p. 3–4. GECCO '20, Association for Computing Machinery, New
  York, NY, USA (2020). \doi{10.1145/3377929.3398184}

\bibitem{FPGA19}
Mado{\v{s}}, B., Bilanov{\'a}, Z., Chovancov{\'a}, E., {\'A}d{\'a}m, N.:
  {F}ield {P}rogrammable {G}ate {A}rray hardware accelerator of prime
  implicants generation for single-output {B}oolean functions minimization. In:
  2019 17th International Conference on Emerging eLearning Technologies and
  Applications (ICETA). pp. 493--498 (2019).
  \doi{10.1109/ICETA48886.2019.9040020}

\bibitem{FPGA20}
Mado{\v{s}}, B., {\'A}d{\'a}m, N., Bilanov{\'a}, Z., Chovanec, M.: {FPGA} {HW}
  accelerator of the first step of systematic two-level minimization of
  single-output {B}oolean function. Acta Polytechnica Hungarica  \textbf{17},
  25--43 (2020)

\bibitem{quine3}
{McCluskey}, E.J.: Minimization of boolean functions. {The Bell System
  Technical Journal}  \textbf{35}(6),  1417--1444 (1956),
  \url{https://doi.org/10.1002/j.1538-7305.1956.tb03835.x}

\bibitem{Petrick}
Petrick, S.R.: On the minimisation of {B}oolean functions. In: Proc. Symp.
  Switching Theory. ICIP, Paris (1959)

\bibitem{quine1}
Quine, W.V.: The problem of simplifying truth functions. The American
  Mathematical Monthly  \textbf{59}(8),  521--531 (1952),
  \url{https://doi.org/10.1080/00029890.1952.11988183}

\bibitem{quine2}
Quine, W.V.: A way to simplify truth functions. The American Mathematical
  Monthly  \textbf{62}(9),  627--631 (1955),
  \url{https://doi.org/10.1080/00029890.1955.11988710}

\bibitem{Scheinman62}
Scheinman, A.H.: A method for simplifying {B}oolean functions. Bell System
  Technical Journal  \textbf{41},  1337--1346 (1962)

\bibitem{Sen83}
Sen, M.K.: Minimization of {B}oolean functions of any number of variables using
  decimal labels. Inf. Sci.  \textbf{30},  37--45 (1983)

\bibitem{GPU17}
Sil{\'a}di, V., Povinsk{\'y}, M., Trajtel', L., Satymbekov, M.: Adapted
  parallel {Q}uine-{M}c{C}luskey algorithm using {GPGPU}. 2017 IEEE 14th
  International Scientific Conference on Informatics pp. 327--331 (2017)

\bibitem{SCLL70}
Slagle, J., Chang, C.L., Lee, R.: A new algorithm for generating prime
  implicants. IEEE Transactions on Computers  \textbf{C-19}(4),  304--310
  (1970). \doi{10.1109/T-C.1970.222917}

\bibitem{SZLLLM21}
Su, Z., Zhang, Q., L{\"u}, Z., Li, C.M., Lin, W., Ma, F.: Weighting-based
  variable neighborhood search for optimal camera placement. In: AAAI
  Conference on Artificial Intelligence (2021)

\bibitem{ToSC:SunWanWan21a}
Sun, L., Wang, W., Wang, M.: Accelerating the search of differential and linear
  characteristics with the {SAT} method. {IACR} Trans. Symm. Cryptol.
  \textbf{2021}(1),  269--315 (2021). \doi{10.46586/tosc.v2021.i1.269-315}

\bibitem{WWW03}
Tomaszewski, S., Celik, I.U., Antoniou, G.E.: {WWW}-based {B}oolean function
  minimization. International Journal of Applied Mathematics and Computer
  Science  \textbf{13},  577--583 (2003)

\bibitem{Wegener87}
Wegener, I.: The Complexity of Boolean Functions. Wiley (1987)

\end{thebibliography}

    \appendix

%
    
    \todos
\end{document}